\DeclareSymbolFont{letters}{OML}{cmm}{m}{it}
\DeclareSymbolFont{symbols}{OMS}{cmsy}{m}{n}
\newtheorem{proposition}{Proposition}
\newtheorem{definition}{Definition}
\title{Ciliate Gene Unscrambling with Fewer Templates}
\author{Lila Kari\thanks{This research was supported by the Natural Sciences and Engineering
Research Council of Canada Grant R2824A01, and Canada Research Chair Award
to L.K.}
\qquad\qquad Afroza Rahman
\institute{Department of Computer Science,
University of Western Ontario,\\
London, Ontario N6A 5B7, Canada}
\email{\qquad lila@csd.uwo.ca \quad\qquad arahman@cs.queensu.ca} 
}
\begin{document}
\maketitle

\begin{abstract}
One of the theoretical models proposed for the mechanism of gene unscrambling in some species of ciliates is the \textit{template-guided recombination (TGR) system} by Prescott, Ehrenfeucht and Rozenberg which has been generalized by Daley and McQuillan from a formal language theory perspective. In this paper, we propose a refinement of this model that generates regular languages using the iterated TGR system with a finite initial language and a finite set of templates, using fewer templates and a smaller alphabet compared to that of the Daley-McQuillan model. To achieve Turing completeness using \textit{only} finite components, i.e., a  finite initial language and a finite set of templates, we also propose an extension of the \textit{contextual template-guided recombination system (CTGR system)} by Daley and McQuillan, by adding an extra control called \textit{permitting contexts} on the usage of templates.
\end{abstract}
\section{Introduction}

This paper proposes improvements in the descriptional complexity of two theoretical models of gene unscrambling in ciliates: template-guided recombination (TGR) systems, and contextual template-guided recombination (CTGR) systems. Ciliates are a group of unicellular eukaryotic protozoans, some of which have the distinctive characteristic of nuclear dualism, i.e., they have two types of nuclei: a functionally inert \textit{micronucleus} and an active \textit{macronucleus}. Genes in the active macronucleus provide RNA transcripts for the maintenance of the structure and function of the cell. Genes within the micronucleus are usually inactive and assist only in the conjugation process. The process of ``decrypting'' the micronuclear genes after conjugation, to obtain the functional macronuclear genes, is 
called {\em gene unscrambling} or {\em gene assembly}.

 The genes within micronuclear chromosomes are composed of protein-coding DNA segments (also known 
 as macronuclear destined sequences (\textit{MDSs})) interspersed by numerous, short, non-protein-coding DNA segments
 (also called internally eliminated sequences (\textit{IESs})). Furthermore, in some species of ciliates such as 
\textit{Oxytricha} or \textit{Stylonychia}, the micronuclear gene has been found to have a highly complex structure in which MDSs are stored in a permuted order. During the course of macronuclear development, 
these IESs are eliminated from the micronucleus by means of homologous recombination, and the permuted MDSs
 are sorted, resulting in a functionally complete macronucleus with MDSs present in the correct order.
In the micronuclear sequence, each MDS is flanked by guiding short sequences, 3 to 20 nucleotides long, 
 which act as pointers in a linked list. For instance, the
 $n$th MDS is flanked on the left by the same short sequence which flanks the $(n-1)$th MDS on the right.
 In the process of gene unscrambling, homologous recombination takes place between two DNA molecules that contain the identical
 guiding short sequences at the correct MDS-IES junctions.
 
Various theoretical models have been proposed in order to model the genetic unscrambling processes in ciliate organisms:
 the \textit{reversible guided recombination model}, \cite{Lil99, Lil63}, 
 based on binary inter- and intra-molecular DNA recombination operations; the
 \textit{ld, hi, dlad model}, \cite{Ter031, Ter03, Dav01}, based on three unary intra-molecular DNA recombination operations;
  the \textit{template-guided recombination (TGR) model}, \cite{Pre03}, where a
 DNA molecule from the old macronucleus conducts inter-molecular DNA recombination process serving as a template; 
 the \textit{RNA-guided DNA assembly model}, \cite{Ang07}, experimentally confirmed in \cite{Nowacki2007},
where either double-stranded RNA or single-stranded RNA act as templates.

This paper proposes two improvements of the descriptional complexity (size of template language, size of alphabet)
 of the template-guided recombination model as studied by Daley and McQuillan \cite{Dal05}.
 In \cite{Dal05}, it has been showed that the Daley-McQuillan 
TGR system can generate all regular languages using iterated template-guided recombination with finite initial
 and template languages. For this model, the considered gene rearrangement processes take place in
 a stochastic style \textit{in vivo} environment. In such a biological setting, it is significant
 to consider the size of the template language \cite{Domaratzki2009, Domaratzki2008}, because sufficient copies of each template must be available
 throughout the recombination process. This is essential to confirm the accessibility of a template
 in the right place at the proper time according to the demand. Hence, the number of the unique templates should
 be as low as possible. The first aim of this paper is a reduction of the size of the template language by
introducing a new approach to generate regular languages applying iterated TGR system with
 a finite initial language and a small finite set of templates. 

The second aim of this paper is the reduction of the size of the template language (from regular to finite),
in the extension of the template-guided recombination model called the 
 \textit{contextual template-guided recombination system (CTGR system)}, \cite{Mar06}. Recall that a CTGR is a TGR enhanced with 
  ``deletion contexts'', the introduction of which made it possible to enhance the TGR computational power to
 that of Turing machines. Our reason for wanting to achieve a reduction of the template set size
 is the obvious one, namely that handling an infinite regular set of templates in a biological setting is impossible.
 To achieve our goal, we employ an additional control over the templates in the
 form of ``permitting contexts''. We namely introduce the \textit{contextual template-guided recombination system
 (CTGR system) using permitting contexts} as an extension of the CTGR system,
 and prove that an iterated version of this system has the computational power of a Turing machine, but only
 uses a finite initial language and a finite set of templates.

The paper is organized as follows. Section~\ref{sec:TGR} introduces our new approach for generating the family
 of regular languages by using iterated TGR systems with $n^2$ templates, compared to $n^3$ templates in 
\cite{Dal05}. This reduction in descriptional complexity is achieved at the expense of using a filtering set
 to discard unintended results. Section \ref{sec:Extension_CTGR}
describes our proposed \textit{CTGR system using permitting contexts} that, unlike CTGR systems, are able to
 characterize the recursively enumerable languages by using only a finite base language and a 
{\em finite set of templates}. This reduction in the size of the template language
 is achieved by introducing an additional control mechanism, the permitting context, to CTGR.

We end this introduction by some formal definitions and notations.
An \textit{alphabet} is a finite and nonempty set of symbols. A \textit{word} or a \textit{string} is a finite sequence of symbols. Let $\Sigma$ be an alphabet. By $\Sigma^{\ast}$ we denote the set of all words over $\Sigma$ that includes the empty one denoted by $\lambda$. The set of nonempty words over $\Sigma$, i.e., $\Sigma^{\ast}$ $\setminus$ $\{\lambda\}$, is denoted by $\Sigma^{+}$. The length of a word $x \in \Sigma^{\ast}$ is denoted by $|x|$. For $k \in \mathbb{N}$, let $\Sigma^{\geq k} = \{w \mid w \in \Sigma^{\ast}, |w| \geq k\}$.
	
For two alphabets $\Sigma$, $\Delta$, a \textit{morphism} is a function $h : \Sigma^{\ast} \rightarrow \Delta^{\ast}$ satisfying $h(xy) = h(x)h(y)$ for all $x, y \in \Sigma^{\ast}$. A morphism $h : \Sigma^{\ast} \rightarrow \Delta^{\ast}$ is called a \textit{coding} if $h(a) \in \Delta$ for all $a \in \Sigma$ and a \textit{weak coding} if $h(a) \in \Delta \cup \{\lambda\}$.   
 We denote by \textit{RE}, \textit{CS}, \textit{CF}, \textit{LIN}, and \textit{REG} the families of languages generated by arbitrary, context-sensitive, context-free, linear, and regular grammars, respectively. By \textit{FIN} we denote the family of finite languages. For additional formal language theory definitions and notations the reader is referred to \cite{Salomaa1973}.

\section{TGR systems with fewer templates}
\label{sec:TGR}

This section proposes a refinement of the \textit{template-guided recombination (TGR)} model as studied in 
the model by Daley and McQuillan \cite{Dal05}, that is able to generate the family of regular languages
 by using a reduced number of templates and a smaller alphabet.

\begin{definition}
\label{TGR1}(\cite{Dal05})
A template-guided recombination system (or TGR system) is a four tuple\\ $\varrho = (T, \Sigma, n_{1}, n_{2})$ where $\Sigma$ is a finite alphabet, $T \subseteq \Sigma^{*}$ is the template language, $n_{1}$ is the minimum MDS length and $n_{2}$ is the minimum pointer length.
\end{definition}

For a TGR system $\varrho = (T,\Sigma,n_{1},n_{2})$ and a language $L\subseteq \Sigma^{*}$, $\varrho(L) = \{w\in\Sigma^{*}\mid(x,y)\vdash_{t}w$ for some $x,y \in L, t \in T \}$ where $(x,y)\vdash_{t}w$ iff $x = u\alpha\beta d, y = e\beta\gamma v, t = \alpha\beta\gamma, w = u\alpha\beta\gamma v, u, v, d, e \in \Sigma^{*},\alpha,\gamma \in \Sigma^{\geq n_1}, \beta \in \Sigma^{\geq n_2}$. 
 $L$ is sometimes called the \textit{base}, or \textit{initial} language.
 
Note that, if $x$ is a segment of the micronuclear DNA sequence that
contains the $n$th MDS $\alpha$,
and $y$ is a segment of the micronuclear DNA sequence that 
contains the $(n+1)$st MDS $\gamma$,
then the recombination between $x$ and $y$ guided by the template $t$ 
will result in bringing the MDSs $n$ and $(n+1)$ in the correct order in the intermediate 
DNA sequence $w$, regardless of their original position in the micronuclear sequence. A sequence
 of such template-guided recombinations is thought to accomplish the
 gene unscrambling, and the transformation of the micronuclear DNA sequence
in the macronuclear DNA sequence in ciliates.

For a TGR system $\varrho = (T,\Sigma,n_{1},n_{2})$ and a language $L \subseteq \Sigma^{*}$, 
 $\varrho^{*}(L)$ is defined as follows:
 $$\varrho^{0}(L) = L ,\quad \varrho^{n+1}(L)= \varrho^{n}(L) \cup \varrho(\varrho^{n}(L)),\;  n \geq 0,
\quad \varrho^{*}(L) = \bigcup^{\infty}_{n=0} \varrho^{n}(L).$$
If $\mathcal{L}_{1}, \mathcal{L}_{2}$ are two language families, then $\pitchfork^{*}(\mathcal{L}_{1},\mathcal{L}_{2},n_1,n_2) = \{\varrho^{*}(L)\mid L \in \mathcal{L}_{1}, \varrho = (T,\Sigma,n_{1},n_{2}), T \in \mathcal{L}_{2}\}$ and $\pitchfork^{*}(\mathcal{L}_{1},\mathcal{L}_{2}) = \{\pitchfork^{*}(\mathcal{L}_{1},\mathcal{L}_{2},n_1,n_2)\mid n_1,n_2 \in \mathbb{N}\}.$ 

In \cite[Prop.~15]{Dal05}, Daley and McQuillan prove that 
all regular languages can be generated using 
iterated template-guided recombination systems with finite initial and template languages, i.e., every regular language 
is a coding of a language in the family $\pitchfork^{*}(\mbox{FIN}, \mbox{FIN})$. The limitation of the Daley-McQuillan 
model \cite{Dal05} is that the size of the template language and the alphabet was not meant to be optimized. 
Since the size of the template language will have a great impact on this type of model during \textit{in vivo} computation, 
this is an important factor. Our aim is to reduce this number of templates.
We namely introduce a new approach to generate regular languages using iterated template-guided recombination, using a
 finite initial language, a finite set of templates, and a weak coding. 
We provide a simpler construction than that of \cite[Prop.~15]{Dal05}, with fewer templates and a smaller alphabet.    

\begin{proposition}
\label{TGRREGWE}
Each regular language $L \subseteq \Sigma^{*}$ can be written in the form $L = h(\varrho^{*}(L_{0}) \cap R)$, where $R$ is a regular language, $h$ is a weak coding homomorphism, $\varrho = (T, \Sigma^{'}, 1, 1 )$ is a TGR system, $T$ is a finite set of templates and $L_{0} \subseteq \Sigma^{'*}$ is a finite language.
\end{proposition}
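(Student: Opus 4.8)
The plan is to represent $L$ by the accepting computations of a finite automaton and to let the iterated TGR system assemble the encodings of these computations one transition at a time, using $R$ and $h$ only to read off the valid computations. Fix an NFA $M=(Q,\Sigma,\delta,q_{0},F)$ with $\mathcal{L}(M)=L$ and set $\Sigma^{'}=\Sigma\cup Q$ (assuming $\Sigma\cap Q=\emptyset$). I would encode an accepting path $q_{0}=p_{0},p_{1},\dots ,p_{m}$ reading $a_{1}\cdots a_{m}$ (so $p_{i}\in\delta(p_{i-1},a_{i})$ and $p_{m}\in F$) by the word $E=p_{0}a_{1}p_{1}a_{2}p_{2}\cdots a_{m}p_{m}$, interleaving the consumed letters with the visited states. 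The weak coding $h:\Sigma^{'*}\to\Sigma^{*}$ defined by $h(a)=a$ for $a\in\Sigma$ and $h(r)=\lambda$ for $r\in Q$ then erases the state symbols and returns $h(E)=a_{1}\cdots a_{m}$.

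For the finite initial language I take the elementary transition blocks $L_{0}=\{\,par\mid p,r\in Q,\ a\in\Sigma,\ r\in\delta(p,a)\,\}\cup\{\,q_{0}\mid q_{0}\in F\,\}$, and for the templates the length-three strings $T=\{\,arb\mid a,b\in\Sigma,\ r\in Q\,\}$, with $\varrho=(T,\Sigma^{'},1,1)$. Choosing $n_{1}=n_{2}=1$ together with $|t|=3$ is deliberate: in any step $(x,y)\vdash_{t}w$ with $t=arb$, the requirement $\alpha,\beta,\gamma\in\Sigma^{'+}$ forces the unique splitting $\alpha=a$, $\beta=r$, $\gamma=b$, so the overlapping pointer is exactly one state symbol. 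Hence the operation glues a string whose relevant part ends in $\ldots a\,r$ to one whose relevant part begins $r\,b\ldots$ and yields $\ldots a\,r\,b\ldots$, i.e.\ it concatenates two transition blocks sharing the state $r$. Assembling $p_{0}a_{1}p_{1},\,p_{1}a_{2}p_{2},\dots$ from left to right (always with the discarded tails $d=e=\lambda$) places $E\in\varrho^{*}(L_{0})$ for every accepting path, which gives the completeness direction; the word $q_{0}$ covers $\lambda\in L$ and single-letter words are themselves blocks.

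The filter is the regular language $R=q_{0}(\Sigma Q)^{*}\cap\Sigma^{'*}F$, selecting the strictly alternating words that start in $q_{0}$ and end in a final state. The step I expect to be the main obstacle is soundness: since a recombination may match $ar$ at an internal position of $x$ and discard the tails $d,e$, one must rule out spurious words of $R$ that are not genuine computations. I would establish the invariant that every word of $\varrho^{*}(L_{0})$ has each of its length-three factors of type $Q\,\Sigma\,Q$ equal to a valid block $par$ with $r\in\delta(p,a)$. This holds for $L_{0}$ and is preserved by $\vdash_{t}$: writing $w=u\,a\,r\,b\,v$, the window $u\,a\,r$ is a prefix of $x$ and $r\,b\,v$ is a suffix of $y$, so every length-three factor of $w$ already occurs in $x$ or in $y$ except the junction factor $a\,r\,b$, which has type $\Sigma\,Q\,\Sigma$ and so never constrains a $Q\,\Sigma\,Q$ factor (this is also why adding every $arb$ to $T$ is harmless). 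Granting the invariant, any $w\in\varrho^{*}(L_{0})\cap R$ is an alternating word $q_{0}a_{1}p_{1}\cdots a_{m}p_{m}$ with $p_{m}\in F$ all of whose consecutive triples $p_{i-1}a_{i}p_{i}$ are valid transitions, hence the encoding of an accepting path, so $h(w)=a_{1}\cdots a_{m}\in L$. Combined with completeness this yields $L=h(\varrho^{*}(L_{0})\cap R)$. Finally, $|T|\le|\Sigma|^{2}|Q|$: each template carries a single state as its pointer rather than several state components, which is the source of the reduction in the number of templates and in the alphabet relative to \cite{Dal05}.
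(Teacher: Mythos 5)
Your construction is essentially the paper's: the paper simulates a regular grammar rather than an NFA, encoding a derivation as the alternating word $Sa_1X_1a_2X_2\cdots a_n\#$ built left-to-right from length-three blocks via length-three templates $aXb$ whose middle (nonterminal) symbol is the pointer, then filtering with $\{S\}(\Sigma N)^*\{\#,\#\#\}$ and erasing with a weak coding --- isomorphic to your state-interleaved encoding under the standard grammar/automaton correspondence, with the end marker $\#$ playing the role of your final-state condition in the filter. Your $Q\,\Sigma\,Q$-factor invariant for the soundness direction is in fact a cleaner justification of the reverse inclusion than the paper's, which essentially just asserts it.
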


\begin{proof}
Let $L \in REG$ be generated by a regular grammar $G = (N,\Sigma,S,P)$ with the rules in $P$ of the form
 $X \rightarrow aY$, $X \rightarrow a \mid \lambda$, for $X$, $Y \in N$, $a \in \Sigma$. We construct a TGR system
 $\varrho = (T, \Sigma^{'}, n_1, n_2 )$, where $n_1 = n_2 = 1$ and the alphabet
 $\Sigma^{'} = N \cup \Sigma \cup \{\#\}$. Here, $\#$ is a new symbol which assists to complete the recombination process acting
 as an end marker. Then we construct a finite \textit{base language} $L_{0} \subseteq \Sigma^{'*}$ and a \textit{template language} $T \subseteq \Sigma^{'*}$ as follows.
\begin{tabbing}
We \=define t\=he finite base language by:\\ 
			\> $L_1 = \{ S a \# \mid  \exists \:  S \rightarrow a \in P, a \in \Sigma\}$, $L_2 = \{ S a X \mid  \exists \:  S \rightarrow aX \in P, a \in \Sigma\}$,\\
			\> $L_3 = \{ X b Y \mid  \exists \: X \rightarrow bY \in P, X, Y \in N, b \in \Sigma\}$,\\
			\> $L_4 = \{ X a X \mid  \exists \: X \rightarrow aX \in P, X \in N, a \in \Sigma\}$, $L_5 = \{ X a \# \mid  \exists \: X \rightarrow a \in P\}$,\\
			\> $L_6 = \{ X \# \# \mid  \exists \: X \rightarrow \lambda \in P\}$, $L_0 = L_1 \cup L_2 \cup L_3 \cup L_4 \cup L_5 \cup L_6 $.\\
The\=~finite template language is \=defined by: \\ 
		\> $T_1 = \{a X b \mid a, b, X, Y, Z \in \Sigma^{'}, \exists \: Y \rightarrow aX \in P$, either
                       $\exists \: X \rightarrow bZ \in P$ or $\exists \: X \rightarrow b \in P \}$,\\
		\> $T_2 = \{a X a \mid a, X \in \Sigma^{'}, \exists \:  X \rightarrow aX \in P\}$,\\
		\> $T_3 = \{a X \# \mid a, X, Y, \# \in \Sigma^{'}, \exists \: Y \rightarrow aX \in P, \exists \: X \rightarrow \lambda \in P\}$, $T = T_1 \cup T_2 \cup T_3$.
\end{tabbing}

Note that for example, $L_4 \subseteq L_3$, $L_2 \subseteq L_3$, and $T_2 \subseteq T_1$. However, we made these separations for the purpose of the clarity of the proof.

 In order to eliminate all non terminals and the new symbol, we consider the weak coding $h$ defined by
$h(X) = \lambda$, for any $X \in N,$ $h(a) = a$, for any $a \in \Sigma,$  $h(\#) = \lambda $.
Moreover, we consider the language  $R = \{S\}(\Sigma N)^*\{\# ,\#\#\}$, whose purpose is to ensure
 that only strings of the correct form will be accepted, by removing other unintended strings.

We claim that $L = h(\varrho^{*}(L_{0}) \cap R)$.

For the ``$\subseteq$'' inclusion, in order to obtain a valid derivation in $G$ and to
 continue recombinations, we consider the string $S a X$ from group $L_2$ as the 
first string in the recombinations. At this stage, through recombination, the application of the
 rules of the form $X \rightarrow bY \in P$ can be achieved as follows. During the recombination,
 a string $XbY$ from group $L_3$ as the second string can be recombined with the string $S a X$, 
and an appropriate template $aXb$ from group $T_1$, can be used to produce the string $SaXbY$ which
 is of the form $\{S\}(\Sigma N)^*$, with $X, Y \in N$ and $a, b \in \Sigma$. By using only the 
templates from group $T_1$, the simulation of the rules of the form $X \rightarrow bY \in P$ 
is possible, because no other template from group $T_2$, $T_3$ can be used. The simulation
 is as follows:
$$(SaX, XbY) \vdash_{aXb} SaXbY.$$

The above mentioned simulation process can be repeated an arbitrary number of times according to the templates in group $T_1$. Likewise, rules of the form $X \rightarrow aX \in P$ can be simulated that produce the string $SaXaX$ using a template from $T_2$ and considering the second string from group $L_4$ as follows: 
$$(SaX, XaX) \vdash_{aXa} SaXaX.$$

The application of the rules of the form $X \rightarrow aX \in P$ can also be simulated repeatedly. In general, for representing an intermediate recombination, if $u, v \in \varrho^{*}(L_{0})$ illustrates derivations of $G$ and $u = u^{'}aY$, $v = Ybv^{'}$, where $u^{'} \in \{S\}(\Sigma N)^*$, $v^{'} \in (N\Sigma)^*N $, $a \in \Sigma$, $b \in \Sigma$, then the resulting recombined string $(u, v) \vdash_{aYb} u^{'}aYbv^{'}$ that is a string of the form $\{S\}(\Sigma N)^* $ can be generated which corresponds to an intermediate computation of the form $S \Rightarrow^* \delta N $ in $G$ where $\delta \in \Sigma^*$.

If a template finds more than one matching point in the first string, then the template can attach to any of those points and a matching second string from $L_0$ as guided by the template can be recombined with the first string. For example, such a recombination can happen to a string of the form\\ \hspace*{1.0 in}$Sa_1X_1a_2X_2 \ldots a_{i}X_{i}a_{i+1}X_{i+1} \ldots a_{k-1}X_{k-1}a_{k}X_{k}.$

Along this string if $a_{i}X_{i} = a_{k}X_{k}$ for some $1 \leq i \leq k$, then the recombination guided by a template $a_{i}X_{i}b$ can take place either at the matching position $a_{i}X_{i}$ or at the matching position $a_{k}X_{k}$ between the above first string and the second string of the form $X_ibY$. This recombination will produce the resulting string either of the form $Sa_1X_1a_2X_2 \ldots a_{i}X_{i}bY$ or of the form $Sa_1X_1a_2X_2 \ldots$ $a_{i}X_{i}a_{i+1}X_{i+1} \ldots$ $a_{k-1}X_{k-1}a_{k}X_{k}bY$, respectively. Note however that any recombination that does not happen at the rightmost end of the sentential form has only the effect of ``resetting" the derivation a few steps backward. Thus, without loss of generality, we will hereafter assume that any derivation that results in a terminal word has an equivalent rightmost derivation. We will only discuss these rightmost derivations.

Note also that recombinations can proceed in parallel, for example, a recombination can take place between a string of the form $Sa_1X_1a_2X_2 \ldots a_{i}X_{i}$ and a string of the form $X_{i}a_{i+1}X_{i+1}\ldots a_{k-1}X_{k-1}a_{k}X_{k}$ or alternatively $X_{i}a_{i+1}X_{i+1}\ldots$ $a_{k-1}X_{k-1}a_{k}\#$ using an appropriate template of the form $a_{i}X_{i}a_{i+1}$ that will lead to a resulting string of the form
 $Sa_1X_1a_2X_2 \ldots a_{i}X_{i}a_{i+1}X_{i+1} \ldots a_{k-1}X_{k-1}a_{k}X_{k}$ or\\ 
 $Sa_1X_1a_2X_2 \ldots a_{i}X_{i}a_{i+1}X_{i+1} \ldots a_{k-1}X_{k-1}a_{k}\#$, respectively. Any such derivation, however, can be replaced by a derivation that starts from a word containing $S$ and proceeds unidirectionally towards a terminal word. 

Let us now examine the simulation of the terminating rules. Here, it is assumed that a string of the form $Sa_1X_1a_2X_2 \ldots a_{n-1}X_{n-1} \in \{S\}(\Sigma N)^*$ is to be considered as the first string that was produced at the previous step. Now the application of the rules of the form $X_{n-1} \rightarrow a_n \in P$ can be achieved using the second string of the form $X_{n-1} a_n \#$ from group $L_5$ and applying the matching template $a_{n-1}X_{n-1}a_n$ from group $T_1$. After recombination, the produced string is $Sa_1X_1a_2X_2 \ldots a_{n-1}X_{n-1}a_n\# = w^{'}$ which is of the form $\{S\}(\Sigma N)^*\{\#\}$ and corresponds to our intended terminal word. At this point, any further recombination at the right most end of this produced terminal string stops because no matching template can be found in the finite set of templates $T$ to guide recombination with this string: 
$$(Sa_1X_1a_2X_2 \ldots a_{n-1}X_{n-1}, X_{n-1} a_n \#) \vdash_{a_{n-1}X_{n-1}a_n} Sa_1X_1a_2X_2 \ldots a_{n-1}X_{n-1}a_n\#.$$

Moreover, for simulating a rule of the form $X_{n-1} \rightarrow \lambda$, the required second string is 
from group $L_6$ and the corresponding template from the group $T_3$. Recombination yields a string $Sa_1X_1a_2X_2 \ldots$\\ $a_{n-1}X_{n-1}\#\#$ of the form $\{S\}(\Sigma N)^*\{\#\#\}$ that is the terminal string and further recombination cannot take place:
$$(Sa_1X_1a_2X_2 \ldots a_{n-1}X_{n-1}, X_{n-1}\#\#) \vdash_{a_{n-1}X_{n-1}\#} Sa_1X_1a_2X_2 \ldots a_{n-1}X_{n-1}\#\#.$$

By construction, it is clear that each string in $\varrho^{*}(L_{0})$ corresponds to a derivation in $G$, and the simulation of a derivation is possible only by using recombinations according to the corresponding template from the finite template language $T$. Accordingly, each 
derivation in $G$ of the form
$$S \Longrightarrow a_1X_1 \Longrightarrow^{*} \ldots \Longrightarrow a_1a_2 \ldots a_{k}X_k \Longrightarrow a_1a_2 \ldots a_{k}a_{k+1}X_{k+1}\Longrightarrow^* \ldots $$
 \hspace*{1.5 in}$a_1a_2 \ldots a_{n-1}X_{n-1} \Longrightarrow	a_1a_2 \ldots a_{n} = w,$\\ 
where $1 \leq k \leq n, X_k \rightarrow a_{k+1}X_{k+1} \in P, X_{n-1} \rightarrow a_{n} \in P,$
corresponds to a computation in $\varrho$ of the form
$$(Sa_1X_1, X_1a_2X_2) \vdash_{a_1X_1a_2} Sa_1X_1a_2X_2 \Longrightarrow^* \ldots Sa_1X_1a_2X_2 \ldots a_{k}X_{k}a_{k+1}X_{k+1}$$ 
$$\Longrightarrow^* \ldots Sa_1X_1a_2X_2 \ldots a_{k}X_{k}a_{k+1}X_{k+1} \ldots a_{n-1}X_{n-1}a_{n}\# = w^{'}, \mbox{ or }$$
 \hspace*{1.0 in}$Sa_1X_1a_2X_2 \ldots a_{k}X_{k}a_{k+1}X_{k+1} \ldots a_{n-1}X_{n-1}\#\# = w^{'}.$

Therefore, we can say from the above description that a terminal string according to the grammar $G$ is achievable only by 
starting the recombination with a string that begins with the start symbol $S$ (that means considering as the first string a string containing the start symbol $S$ at the beginning) and then proceeding by a series of recombination processes according to the appropriate templates from the finite template language $T$ for an arbitrary number of times, which end up with the end marker $\#$ and simulate a derivation according to $G$. Afterwards, intersecting the language $R = \{S\}(\Sigma N)^*\{\# , \#\#\}$ with the set of generated strings, we obtain our intended terminal strings. In this way, we are able to find a string $w^{'} \in \varrho^{*}(L_{0}) \cap R$ and then the application of the weak coding $h(w^{'}) = w \in \Sigma^{*}$ allows us to obtain the exact string generated by a derivation in $G$. Thus, every derivation in $G$ can be simulated. 

Hence, we obtain $L \subseteq h(\varrho^{*}(L_{0}) \cap R)$. The other inclusion follows because the only recombinations that can happen according to $\varrho$ lead either to words that are eliminated by the filter, or to words in $L$ after applying the weak coding.      
\end{proof}
Let us now compare the size of the template language we obtained with that of 
the Daley-McQuillan model \cite{Dal05}. The Daley-McQuillan model \cite{Dal05} requires three production
 rules to construct a template based on their definition of the template language in the following.
$T = \{[X, a, Y][Y, b, Z][Z, c, W]\}$ where $[X, a, Y],[Y, b, Z],[Z, c, W] \in V,$ $V$ is an alphabet and
 $X \rightarrow aY$, 
$Y \rightarrow bZ$, $Z \rightarrow cW \in P$.
If the number of production rules in the grammar is $|P| = n$, then based on this definition the template language
 has a cardinality of $n^3$.  

Our construction requires two production rules to construct a template.
In the worst case we can have $n^2$ templates where $n$ is the number of production rules in the simulated grammar.
 In addition to the size of templates, the size of the TGR alphabet $\Sigma'$ in our construction is small: one
 plus the number of terminals and nonterminals in the simulated grammar. In the Daley-McQuillan model
 as described above, the alphabet $V$ can be much larger, and it also depends on the number of productions 
of the grammar. Although we require fewer templates and alphabet, our model has one limitation, i.e.,
 it requires a filter to discard unintended results, while the Daley-McQuillan model requires
 only the correct recombination to occur according to the constructed matching templates.

\section{CTGR systems with permitting contexts}
\label{sec:Extension_CTGR}
As shown in \cite{Mar06, Mar061, Dal05}, the finiteness of the initial language and the set of
 templates restricts the computational power of a TGR system. In fact, even with a
 regular initial language and a regular set of templates, iterated TGR systems can generate at most regular
 languages \cite{Dal05}.

 Daley and McQuillan \cite{Mar06} have added a new feature called ``deletion context" to enhance the computational
 power of template-guided recombination. Their extension of the TGR system is called the
 \textit{contextual template-guided recombination} system (CTGR system). 
 In \cite{Mar06}, it was shown that arbitrary recursively enumerable languages can be generated by
 iterated CTGR with a regular set of templates and a finite initial language, with the help of taking
  intersection with the Kleene star of the terminal alphabet, and a coding. From a practical viewpoint, dealing
 with a regular set of templates is not realistic in the sense that we cannot manage an infinite ``computer". 

To achieve the finiteness of the employed component sets while preserving the computational power of CTGR,
 we impose an additional control on the templates in order to restrict their usage.
 More precisely, we associate each template with a set of ``permitting contexts'': strings that must  
appear as subwords within the two participating words 
 if this particular template is to be used for their recombination. The idea of permitting contexts has been 
previously used in the context of splicing systems, a formal model of DNA recombination that uses restriction
 enzymes and ligases \cite{Salomaa1998}.

\begin{definition}
\label{CTGRWITHP}
A contextual template-guided recombination system (CTGR system) using permitting contexts is a quadruple $\varrho_p = (T,\Sigma,n_{1},n_{2})$,
where $\Sigma$ is a finite alphabet, $n_{1}\in \mathbb{N}$ is the minimum MDS length and $n_{2}\in \mathbb{N}$ is the minimum pointer length, $T$ is a set of triples (templates using permitting contexts) of the form $t_p = (t; C_1, C_2)$ with $t = e_1\#\alpha\beta\gamma\#d_1$ being a template over $\Sigma$ and $C_1, C_2$ being finite subsets of $\Sigma^{*}$. To such a triple $t_p$ we associate the word\\
\hspace*{.5 in} $\tau(t_p)= e_1\#\alpha\beta\gamma\#d_1\$a_1\&\ldots\&a_k\$b_1\& \ldots \&b_m$ ,\\
 where $C_1 = \{a_1,\ldots,a_k\}$, $C_2 = \{b_1,\ldots,b_m\}$, $k,m \geq 0$ and $\$, \&,\#$ are new special symbols not included in $\Sigma$. We define $\tau(T) = \{\tau(t_p)\mid t_p \in T\}$.  

For a CTGR system using permitting contexts $\varrho_p =(T,\Sigma,n_{1},n_{2})$ and a language $L\subseteq \Sigma^{*}$, we define $\varrho_p(L)=\{w\in\Sigma^{*}\mid(x,y)\vdash^{c}_{t_p}w$ for some $x,y \in L$ and $t_p \in T\}$, where $(x,y)\vdash^{c}_{t_p}w$ if and only if $x = u\alpha\beta d_1 d$, $y = e e_1\beta\gamma v$, $t_p = (e_1\#\alpha\beta\gamma\#d_1; \{a_1, \ldots,a_k\}, \{b_1, \ldots ,b_m\})$, $w = u\alpha\beta\gamma v, u,v,d,e \in \Sigma^{*}$, $\alpha,\gamma \in \Sigma^{\geq n_1}$, $\beta \in \Sigma^{\geq n_2}$. Every element that belongs to $C_1$ appears as a substring in $x$ and every element that belongs to $C_2$ appears as a substring in $y$, i.e., $a_i \in sub (x)$ for $1 \leq i \leq k$, $b_j \in sub (y)$ for $1 \leq j \leq m$; moreover, if $C_1 = \{\lambda\}$ or $C_2 = \{\lambda\}$, then we assume that no constrain is imposed on $x$ and $y$ respectively.  \end{definition}

For a \textit{CTGR system using permitting contexts} $\varrho_p =(T,\Sigma,n_{1},n_{2})$ and a language $L \subseteq \Sigma^{*}$, a template language $T$, we can define an iterated version of $\varrho^{*}_p(L)$ similarly as for TGR systems.

The following proposition shows that iterated \textit{CTGR system using permitting contexts} can generate arbitrary recursively enumerable languages using a finite initial language and a finite set of templates with the help of intersection with a filter language and, at last, applying a weak coding homomorphism.

\begin{proposition}
\label{CTGRPRE}
Every recursively enumerable language $L \subseteq \Sigma^{*}$ can be written in the form $L = h(L^{'} \cap L_1)$, where $h$ is a weak coding homomorphism, $L_1$ is a regular language and $L^{'} = \varrho^{*}_p(L_0)$ with $L_0$ a finite language. 
\end{proposition}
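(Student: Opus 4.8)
The plan is to simulate a Turing machine---or, more convenient for recombination, a type-0 grammar in a normal form close to the one used for splicing systems---using the \emph{CTGR system with permitting contexts} $\varrho_p$. I would take $L$ to be generated by a grammar $G=(N,\Sigma,S,P)$ whose productions are rewriting rules $u\to v$ with $u,v$ over $N\cup\Sigma$. The key idea is to encode a sentential form $w$ of $G$ as a word of $\varrho^{*}_p(L_0)$ and to make each application of a rule $u\to v$ correspond to exactly one recombination step guided by a dedicated template. Since the recombination operation $(x,y)\vdash^{c}_{t_p}w$ splices the prefix $u\alpha\beta$ of $x$ with the suffix $\beta\gamma v$ of $y$, the natural encoding keeps a single ``current'' sentential form circulating and uses a short auxiliary second word (drawn from the finite $L_0$, or regenerated as needed) carrying the replacement fragment $v$. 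The crucial point is that the \emph{deletion context} $d_1$ in $x=u\alpha\beta d_1 d$ and the matching $e_1$ in $y=e e_1\beta\gamma v$ let us excise material precisely at the splice point, which is exactly what is needed to delete the left-hand side $u$ of a rule and insert $v$; this is how the original CTGR model \cite{Mar06} achieves Turing completeness.

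The main obstacle, and the whole reason permitting contexts are introduced, is that the CTGR model of \cite{Mar06} needs a \emph{regular} (hence infinite) template set to guarantee that a rewriting rule fires only when its left-hand side actually occurs in the current string in the correct position. With only finitely many templates one cannot, through the $\alpha\beta\gamma$ pattern alone, distinguish the intended occurrence from spurious ones or prevent recombinations that scramble the encoding. My plan is to replace that infinite family of templates by a \emph{finite} family, each template $t=e_1\#\alpha\beta\gamma\#d_1$ equipped with permitting contexts $C_1,C_2$ that encode the side conditions previously built into the (infinitely many) templates. Concretely, I would let $C_1$ and $C_2$ contain the finitely many subwords that must be present in $x$ and $y$ for the rule to be legitimately applicable---markers identifying the phase of the simulation, the symbol to be rewritten, and flags ensuring that exactly one rewriting occurs per round. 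Because each $C_1,C_2$ is a finite subset of $\Sigma^{*}$ and there are finitely many rules, the resulting template set $T$ is finite.

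I would then organize the simulation in rounds, each round consisting of (i) a recombination that rewrites one occurrence according to some rule $u\to v\in P$, gated by the permitting contexts so that no other template is applicable, followed by (ii) bookkeeping recombinations that restore the encoding to a canonical form ready for the next round. The permitting contexts are what forbid illegitimate interleavings: a template may attach at several positions, but the requirement that specified control subwords appear in \emph{both} participating words filters out all but the intended splice. For the two inclusions I would argue, as in Proposition~\ref{TGRREGWE}, that every terminating derivation of $G$ lifts to a sequence of legal recombinations (the ``$\subseteq$'' direction), and conversely that any word surviving the intersection with the filter $L_1$ and the weak coding $h$ corresponds to a genuine $G$-derivation (the ``$\supseteq$'' direction), with the filter $L_1$ a regular language discarding strings that still contain control or auxiliary symbols or that are left in an intermediate, non-canonical state.

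Finally, I would define $h$ to be the weak coding erasing all auxiliary symbols (nonterminals, the markers, and the special symbols $\$,\&,\#$) and mapping each terminal of $\Sigma$ to itself, and take $L_1$ to be the regular language selecting exactly the fully-reduced, terminal-only encodings. The hardest part is the gating argument: verifying that the finitely many permitting contexts genuinely suppress \emph{all} unintended recombinations that the infinite template set previously suppressed, so that $\varrho^{*}_p(L_0)\cap L_1$ contains, after coding, precisely $L$ and nothing more. Establishing this requires a careful case analysis of which templates can become active on a partially-rewritten word and checking that the control subwords in $C_1,C_2$ are never simultaneously present except in the intended configuration.
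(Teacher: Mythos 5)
Your high-level architecture agrees with the paper's: simulate a type-0 grammar (the paper uses Kuroda normal form), use deletion contexts to excise the left-hand side of a rule and permitting contexts to gate which template may fire, then recover $L$ by intersecting with a regular filter and applying a weak coding that erases control symbols. But there is a genuine gap at the heart of your plan, namely step (ii), the ``bookkeeping recombinations that restore the encoding to a canonical form.'' The recombination operation always splices a \emph{prefix} of $x$ with a \emph{suffix} of $y$: from $x=u\alpha\beta d_1 d$ and $y=ee_1\beta\gamma v$ you get $u\alpha\beta\gamma v$, so everything in $x$ to the right of the splice point is discarded and replaced by the suffix of $y$. Since your auxiliary second word must come from the finite $L_0$ (or be itself derivable from finitely many short words), it cannot carry $v$ \emph{followed by the arbitrary remainder} $w_2$ of the current sentential form $w_1 u w_2$. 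Consequently a rule $u\rightarrow v$ can only be simulated when $u$ sits at the very end of the current word, where the lost suffix is just an end marker. Your proposal never explains how interior occurrences are reached, and no choice of permitting contexts fixes this, because permitting contexts only test for the \emph{presence} of subwords, not for the \emph{position} at which the splice occurs.

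The paper closes this gap with the rotate-and-simulate technique borrowed from splicing systems with permitting contexts: the sentential form is kept flanked by end markers $X\cdots Y$, a rule is simulated only when its left-hand side is adjacent to $Y$ (enforced by the deletion context $\#uY$ in the template), and a four-step cycle of recombinations --- alternating the markers through primed variants $X'$ and subscripted variants $Y_b$ so that the phases cannot interleave --- moves one symbol $b$ from the right end to the left end, realizing every circular permutation of the word. A marker block $BB_1B_2$ records where the true beginning of the sentential form is, so that termination (stripping $XBB_1B_2$ and leaving a word in $\Sigma^{*}Y$ for the filter) happens only when the word has been rotated back into its original phase. Your ``gating argument'' concern is real but is the easier half; without committing to rotation (or some equivalent device for bringing interior positions to an end of the word), the construction you sketch cannot simulate an arbitrary derivation of $G$, so the ``$\subseteq$'' inclusion fails.
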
 

\begin{proof}
Consider a Chomsky type-0 grammar $G = (N,\Sigma,S,P)$ in Kuroda normal form, where $L(G)= L$ and the
 production rules in $P$ are of the forms
 $A \rightarrow EC$, $AE \rightarrow CD$, $A \rightarrow a \mid \lambda$ for $A, C, D, E \in N$, $a \in \Sigma$.
 Let us denote $U = N \cup \Sigma \cup \{B, B_1, B_2\}$, where $B, B_1, B_2$ are new symbols.

We then construct a CTGR system \textit{using permitting contexts} $\varrho_p = (T,V,1,1)$\\
 where $V = N \cup \Sigma \cup \{B,B_1,B_2, X, X^{'}, Y, Z, Z^{'}\} \cup \{Y_{b}\mid b \in U\}$
 \begin{tabbing}
 and \=$T$ \=co\=ntains th\=e following templates using permitting contexts:\\
 \>\> Simulate : 1. $Z\#cavY\#uY; \{X\}, \{\lambda\}$,   for $a,c \in U, Z, Y \in V, u \rightarrow v \in P$,\\
 \>\>\> Rotate : 2. $Z\#caY_b\#bY; \{X\}, \{\lambda\}$,  for $a,b,c \in U, Z, Y \in V$,\\	
 \>\>\>\>		 	 3. $X\#X^{'}bde\#Z; \{\lambda\}, \{Y_b\}$,	  for $b,d,e \in U, Z, X \in V$,\\
 \>\>\>\>		 	 4. $Z\#caY\#Y_b; \{X^{'}\}, \{\lambda\}$, for $Z, Y, Y_b \in V$,\\
 \>\>\>\>		 	 5. $X^{'}\#Xac\#Z; \{\lambda\}, \{Y\}$, for $X^{'}, X, Z \in V,$\\
 \hspace*{.3 in} Terminate : 6. $XBB_1B_2\#abc\#Z^{'}; \{\lambda\}, \{Y\}$, for $X,Z^{'} \in V, B,B_1,B_2 \in U$.
\end{tabbing}
\noindent
We define the following languages which are included in the initial finite language $L_0 \subseteq V^{*}$:

\noindent
\hspace*{.5 in} $L_1 = \{XBB_1B_2SY\}$, $L_2 = \{ZavY \mid a \in U, u \rightarrow v \in P\}$,\\
\hspace*{.5 in} $L_3 = \{ ZaY_b \mid a \in U \}$, $L_4 = \{X^{'}baZ \mid b,a \in U \}$,\\
\hspace*{.5 in} $L_5 = \{ZaY \mid a\in U\}$, $L_6 = \{XaZ \mid a \}$, $L_7 = \{abZ^{'} \mid a,b \in U\}$.\\
\noindent
We denote $L_0 = L_{1} \cup L_2 \cup L_3 \cup L_4 \cup L_5 \cup L_6 \cup L_7$, and
 $L_0$ acts as the initial language.

For the construction of this system, the idea we use is the well-known proof technique, 
``rotate-and-simulate procedure", which was effectively used in other contexts \cite{Salomaa1998} in order 
to allow the simulation of a rule that applies to a symbol in the middle of the word by first moving that symbol
 to the right hand end of the word, simulating the rule, and returning the result to its original place.

Throughout this construction we assume $x$ and $y$ to be, respectively, the first word and the
 second word of the recombination as defined in Definition ~\ref{CTGRWITHP}.

The starting of the simulation based on the derivation steps in $G$ requires to consider the word $XBB_1B_2SY$ as the first word. Indeed, any other choice of start word leads to derivations of words of illegal form (not in $XBB_1B_2\Sigma^{*}Y$). Throughout the derivation steps this word is bordered by $X$ or its variant $X^{'}$ at the left end, as well as by $Y$ or its variant $Y_{b}$, $b \in U$ at the right end, $X$, $X^{'}$ and $Y$, $Y_b$ make the left respectively right extremity of the word. Likewise, the symbol $B$ always signals the beginning of the word, i.e., the sentential forms of $G$, which facilitates the permutation of the word and $B_1, B_2$ are included to provide the contexts for recombination.

Note that all the templates with permitting contexts in $T$ include symbols $Z$ or $Z^{'}$ that have thus to be present in one of the two words taking part in the recombination. Furthermore, the words containing symbols $Z$ and $Z^{'}$ are from the initial language $L_0$ but will not appear in the resulting word of recombination. This guarantees that each recombination has to happen between the current word which is produced in the previous recombination and at least one word from $L_0$. The simulation of a derivation in $G$ initiates with the application of template 1 to $XBB_1B_2SY \in L_1$ and $ZavY \in L_2$, where initially $w = BB_1B_2S$, $S \rightarrow v \in P$ and $a \in U$. The word obtained through the recombination is $XBB_1B_2vY$:  
$$(XBB_1B_2SY, ZB_2vY)\vdash_{t_p} XBB_1B_2vY$$
for $t_{p} = (Z\#cavY\#uY; \{X\}, \{\lambda\})= (Z\#B_1B_2vY\#SY; \{X\}, \{\lambda\}),$ where $S \rightarrow v \in P$, $c, a\in U$ and $w = BB_1B_2S$.

Generally, considering a word $Xx_1BB_1B_2x_2uY$ and $u \rightarrow v \in P$, the resulting word will be\\ $Xx_1BB_1B_2x_2vY$ applying the associated templates from group 1. Here, $w = w_1cau = x_1BB_1B_2x_2u$. This simulates a derivation step $x_2ux_1 \Longrightarrow x_2vx_1$ in $G$. The derivation is as follows: 
$$(Xw_1cauY, ZavY)\vdash_{t_p} Xw_1cavY$$
for $t_{p} = (Z\#cavY\#uY; \{X\}, \{\lambda\}),$ where $u \rightarrow v \in P$, $c, a \in U$.

In this simulation step, no other templates from groups 2 - 6 can be applied except the templates from group 1 because of imposed restriction as deletion contexts and permitting contexts on the usage of the templates. Afterwards, we come to the rotation process that is necessary so as to move symbols from the right hand end of the current word to the left hand end. This rotation process can be explained by the following steps:

\textit{Step 1}: We can start the rotation process using the corresponding template from group 2 with a word $XwbY$, where $b \in (N \cup \Sigma)^{*}, w \in (N \cup \Sigma)^{*} \{BB_1B_2\}(N \cup \Sigma)^{*}$ (respectively $XwbY$, where $b \in \{B, B_1,B_2\}, w \in (N \cup \Sigma)^{*}$). In this step, $x = XwbY = Xw_1cabY$, $y = ZaY_{b} \in L_3$: 
$$(Xw_1cabY, ZaY_{b})\vdash_{t_p} Xw_1caY_{b}$$ 
for $t_{p} = (Z\#caY_b\#bY; \{X\}, \{\lambda\}),$ where $wb \in (N \cup \Sigma)^{*}\{BB_1B_2\}(N \cup \Sigma)^{*}, b \in N \cup \Sigma\cup \{B,B_1,B_2\}$.

\textit{Step 2}: After applying the template from group 2 in Step 1, we obtained the word $Xw_1caY_{b}$, which we rewrite as $Xdew_2Y_b$ where $w = w_1ca = dew_2$. Then we continue the rotation process using the matching template from group 3 with $x = X^{'}bdZ \in L_4$, $y = Xdew_2Y_b$:
$$(X^{'}bdZ, Xdew_2Y_b)\vdash_{t_p} X^{'}bdew_2Y_b$$
for $t_{p} = (X\#X^{'}bde\#Z; \{\lambda\}, \{Y_b\}),$ where $b \in N \cup \Sigma \cup \{B,B_1,B_2\}$.

\textit{Step 3}: The resulting word from the previous step is of the form $X^{'}bdew_2Y_b$, which can be written of the form $X^{'}wY_b = X^{'}w_3caY_b$. In this step we will apply the matching template from group 4 where $x = X^{'}w_3caY_b$, $y = ZaY \in L_5$:
$$(X^{'}w_3caY_b, ZaY)\vdash_{t_p} X^{'}w_3caY$$ 
for $t_{p} = (Z\#caY\#Y_b; \{X^{'}\}, \{\lambda\})$ where $c, a \in N \cup \Sigma \cup \{B,B_1,B_2\}$.

\textit{Step 4}: The recombined word from Step 3 is $X^{'}w_3caY$, in general, the outcome of step 3 is a word of the form $X^{'}acw_4Y$. Lastly, we complete the rotation process by using a template from group 5 where $x = XaZ \in L_6$, $y = X^{'}acw_4Y$: 
$$(XaZ, X^{'}acw_4Y)\vdash_{t_p} Xacw_4Y$$
for $t_{p} = (X^{'}\#Xac\#Z; \{\lambda\}, \{Y\})$ where $a, c \in N \cup \Sigma \cup \{B,B_1,B_2\}$.

The above mentioned rotation-steps produced the word $Xbacw_4Y = XbwY$ which implies that starting from the word $XwbY$ and applying steps 1 - 4, we achieve the word $XbwY$ having the same end markers. In this way, we are able to move the symbol $b$ from the right-hand end to the left-hand end of the word that accomplishes the rotation of the underlying sentential form. These rotation-steps can be repeated an arbitrary number of times and thus provide every circular permutation of the word flanked by $X$ and $Y$. 

Using a template from group 1 to each word of the form $XwY$ when $w$ ended by the left hand part of a rule in $P$, it is possible to simulate the application of all rules of $P$ at a desired position corresponding to the sentential form of $G$, by means of the four rotation-steps. 

It is observed that from the initial word $XBB_1B_2SY$, each produced word in every step does not include the symbols $Z$, $Z^{'}$, that is, the word is of the form $\alpha_1x_1BB_1B_2x_2\alpha_2$ in which the pair $(\alpha_1, \alpha_2)$ is one of the four pairs $(X, Y)$, $(X, Y_b)$, $(X^{'}, Y_b)$, $(X^{'}, Y), b \in U$. In fact, these symbols being present in the templates of $T$ serve as permitting contexts that restrict the regulation of the recombination process of this system $\varrho_p$.

Now we come to the termination process. Applying the terminating template from group 6, we can remove $XBB_1B_2$ only when $Y$ is present and the symbols $B,B_1,B_2$ together as a word $BB_1B_2$ is adjacent to $X$. Here, $x = abZ^{'} \in L_7$, $y = Xacw_4Y = XBB_1B_2bcw_5Y:$
$$(abZ^{'}, XBB_1B_2bcw_5Y)\vdash_{t_p} abcw_5Y$$
for $t_{p} = (XBB_1B_2\#abc\#Z^{'}; \{\lambda\}, \{Y\}),$ where $w \in (N \cup \Sigma)^{*}\{BB_1B_2\}(N \cup \Sigma)^{*}$, $b, a, c \in N \cup \Sigma \cup \{B,$\\ $B_1, B_2\}$.  

Now our achieved word is of the form $abcw_5Y = wY = w^{'} \in \varrho^{*}_p(L_0)$, i.e., $L^{'} = \varrho^{*}_p(L_0)$. The intersection operation with the language $L_1 = \Sigma^{*}Y$ will filter out the words that are not in proper form. Furthermore, we define a weak coding homomorphism which eliminate the right end marker $Y$ leaving other letters unchanged. Let us now define a weak coding
 homomorphism by
$h(a) = a$ , for any $a \in \Sigma,$
$h(Y) = \lambda$.

Thus, we obtain a word in $\Sigma^{*}$ by applying the weak coding homomorphism where $w \in h(L^{'} \cap L_1)$. Finally, from the above construction we can produce each word in $L(G)$ and we say that $L(G) \subseteq h(\varrho^{*}_p(L_0) \cap \Sigma^{*}Y)$. Conversely, the opposite inclusion is held by this system. Therefore, $h(\varrho^{*}_p(L_0) \cap \Sigma^{*}Y) \subseteq L(G)$.
\end{proof}

\section{Conclusions}
\label{sec:Conclusion}
This paper improves on the descriptional complexity (size of the template language) from $n^3$ to $n^2$ in 
the case of template-guided recombination (TGR) systems, and from regular to finite in the case of contextual 
template-guided recombination (CTGR) systems. These reductions are obtained at the expense of using a filtering
 language in the case of TGR, and of an additional control (permitting contexts) in the case of CTGR.

\bibliography{references_database} 
\bibliographystyle{eptcs}
\end{document}